\title{Online Firefighting on Grids\thanks{The support of GEO-SAFE,  H2020-MSCA-RISE-2015 project \# 691161 is fully 
acknowledged.}}
\author{Marc Demange\inst{1} \and David Ellison\inst{1} \and Raffaella Gentilini\inst{2}\thanks{Corresponding author}}
\institute{{RMIT University, School of Science, Melbourne, Australia \and Dip. di Matematica e Informatica, Universit\`a di Perugia, Via Vanvitelli 1, Perugia (IT) .\\
$\{$marc.demange$|$david.ellison$\}$@rmit.edu.au,
$\{$raffaella.gentilini$\}$@dmi.unipg.it}}
\begin{document}
\maketitle

\begin{abstract}
The Firefighter Problem ({\sc FP}) is a graph problem originally introduced in 1995 to model the spread of a fire in a graph,  which has  attracted considerable attention in the literature. The goal is to devise a strategy to employ a given sequence of firefighters on strategic  points in the graph  in order to  contain efficiently the fire (which spreads  from each unprotected vertex to all of it neighbours on  successive time steps). 

Recently,  an \emph{online} version of {\sc FP}--- where  the number of firefighters available at each turn are revealed in \emph{real-time}--- has been introduced in \cite{Demange2018,Demange2019} and studied on trees. 
In this paper, we extend the work in  \cite{Demange2018,Demange2019} by considering the online containment of fire on square grids.  In particular, we provide a set of sufficient conditions that allow to   solve the online  version of the firefighting problem on infinite square grids, illustrating the corresponding  fire containment strategies.
\end{abstract}

\section{Introduction}

The Firefighter Problem ({\sc FP}, from now on)  is a   combinatorial problem  introduced  by Bert Hartnell in 1995 \cite{Hartnell95}, providing a  deterministic, discrete-time model of the spread of a fire on the vertices of a graph. Suppose that a fire  breaks out at time $0$ at a vertex $v$ of a graph $G$. 
At each subsequent time $t$, $f_t$  firefighters  protect a corresponding number of $f_t$ vertices  in  $G$, and then   the fire spreads from each burning vertex to all of its undefended neighbours. Once a vertex is burning or defended, it remains so from then onwards. The process terminates when the fire can no longer spread.  In the case of finite graphs, the aim is to save as many vertices as possible, while in the infinite case, the aim is that of simply  containing the fire.

Since its introduction in  \cite{Hartnell95},   {\sc FP}   has been studied intensively in the literature \cite{AdjiashviliBZ17,Anshelevich2012,Finbow2007,FinbowM09,Garcia2015}. In particular  {\sc FP}  has been shown  NP-complete for bipartite graphs \cite{Hartnell95}.    Finbow et al. have strengthened this result  \cite{Finbow2007},  proving that   {\sc FP} is NP-complete even if restricted to trees with maximum degree three. In contrast, it is solvable in polynomial time for graphs of maximum degree three, if the fire starts at a vertex of degree two. A polynomial time approximation scheme for {\sc FP} on trees has been recently provided in \cite{AdjiashviliBZ17}.  Beside trees,  {\sc FP} has been extensively studied on the families of graphs of grids \cite{Fomin2016,messinger07,messinger08,raff08,wang}.
Wang and Moeller \cite{wang} proved that one firefighter per turn cannot  control a fire sourcing from a vertex $v\in \mathbb{L}_2=\mathbb{Z}\times \mathbb{Z}$, while  two firefighters per turn are sufficient to solve  {\sc FP}  on $\mathbb{L}_2$   within 8 turns and  18 burnt vertices. Ng and Raff \cite{raff08} proved that any periodic function $(f_t)_{t\geq 1}$ whose average exceeds $\frac{3}{2}$ allows the firefighters to control any finite-source fire in $\mathbb{L}_2$.
 The interested reader can refer to \cite{FinbowM09,Fomin2016,Garcia2015} for a survey of recent results on the complexity analysis of {\sc FP} and its variants. 
  
  Recently, Coupechoux et al. \cite{Demange2018,Demange2019} have considered an \emph{online} version of    {\sc FP}, where the number of firefighters available at each turn are revealed in \emph{real-time}. In \cite{Demange2018,Demange2019}, the structure of the underlying graph in the online {\sc FP} is a tree, and  suitable competitivity results are provided. In this work, we consider the online version of  {\sc FP} on the  infinite Cartesian  grid $\mathbb{L}_2=\mathbb{Z}\times \mathbb{Z}$.

\section{Offline vs Online Firefighting on Grids}
The classic offline version of the firefighter problem can be understood as a deterministic one-player game, where Player $1$ knows in advance the sequence $(f_i)_{i\geq 1}$ of firefighters available overall the game. In contrast, within the online version of the firefighter problem, an adversary called Player $2$ reveals   to Player $1$---turn by turn---how many firefighters are ready to be used in the current turn of the game.  Therefore, the online firefighter problem can be understood as a two player game. More precisely, an instance of the online firefighter problem is given by the tuple $(\mathcal{A}, (f_i)_{i\geq 1})$, where both players are aware of the arena of the game, $\mathcal{A}=\langle G,v\rangle$, which  is composed  by  the graph $G$\-- the Cartesian grid $\mathbb{L}_2=\mathbb{Z}\times \mathbb{Z}$ in this paper\-- and the ignition vertex $v\in \mathbb{L}_2$. Instead,   only Player $2$ knows  the firefighters sequence $(f_i)_{i\geq 1}$. At each turn $i$ of the game, Player $2$ reveals $f_i$ to Player $1$. Then, Player $1$ chooses $m\leq f_i$ vertices (neither protected nor burned), where to place a new firefighter. Finally, the fire spreads on each unprotected  neighbour of a vertex on fire, leading to the next turn of the game. Player $2$ wins if at each turn a new vertex is burning, otherwise Player $1$ wins.  

We provide a constraint applying to the sequence of firefighters $(f_i)_{i\geq 1}$ (cf. Condition \eqref{eq1}) that can be shown to be a sufficient condition for Player $1$ to win the offline version of the firefighter problem, while there is an instance of the online firefighter problem that fulfills Condition \eqref{eq1}  where Player $1$ looses.

\begin{equation}\label{eq1}
\exists N\geq 1: \sum_{i=1}^N f_i \geq 4N
\end{equation}

Intuitively, Condition \eqref{eq1} guarantees the existence of a turn of the game for which the \emph{global} number of  firefighters deployed (from the beginning of the game up to the current turn) have been $4$ times the number of turns played so far. Hence, an offline strategy for Player $1$ can  rely on the knowledge of $(f_i)_{i\geq 1}$ to surround the fire at the right distance. However, in the online version of the game, Player $1$ does not have any information  about the turn of the game fulfilling Condition \eqref{eq1}.  This is formalized by Theorem \ref{FPgridOffineOnline}, below.
\begin{theorem}\label{FPgridOffineOnline}
Condition \eqref{eq1} is sufficient (resp. not sufficient) for Player $1$ to win the offline (resp. online) version of the firefighter problem. Moreover, $5$ turns are enough to make any online strategy for Player $1$ to fail.

\end{theorem}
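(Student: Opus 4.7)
The statement has two independent halves: (i) offline sufficiency of \eqref{eq1} and (ii) online insufficiency witnessed within $5$ turns. I would tackle them in turn.

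For the offline direction, assume \eqref{eq1} holds with witness $N$, so $\sum_{i=1}^N f_i \geq 4N$. My strategy for Player~1 is: at each turn $i = 1, \ldots, N$, place the $f_i$ available firefighters on arbitrary still-unused vertices of the $L^1$-distance-$N$ layer around the ignition vertex $v$, ignoring any overflow past the $4N$ slots. Validity is immediate, since at the start of turn $i$ the fire lies within $\{x : \|x-v\|_1 \leq i-1\}$ and $i-1 < N$, so distance-$N$ vertices are never burning when chosen. Coverage follows from \eqref{eq1}: the distance-$N$ layer has exactly $4N$ vertices and we place at least $4N$ firefighters on it by the end of turn $N$. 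Once the distance-$N$ boundary is fully protected, the fire cannot cross it, so no new vertex burns from turn $N+1$ onward and Player~1 wins.

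For the online direction, I would construct an adaptive strategy for Player~2 together with a scenario of $5$ rounds such that (a) $\sum_{i=1}^5 f_i \geq 20$, witnessing \eqref{eq1} at $N=5$, and yet (b) for every online strategy of Player~1 a new vertex burns at each of turns $1, \ldots, 5$ and the resulting configuration cannot be contained afterwards. The adversary would start with a small $f_1$ (say $f_1=1$) to force Player~1 into an early commitment at some distance $d_1$, then reveal subsequent $f_i$'s depending on Player~1's observed placements so that the eventual critical distance for the barrier differs from where Player~1 concentrated her early placements. The bulk of the $20$-firefighter budget is deferred to $f_5$, while $f_2,\ldots,f_4$ are tuned so that Player~1's early placements become wasted for the eventual distance-$5$ containment.

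The main obstacle is making this adversary robust against \emph{all} online strategies of Player~1. Natural strategies that pre-commit to the distance-$5$ layer must be defeated, but so must subtler strategies exploiting corner vertices of the distance-$4$ boundary: a protected distance-$4$ corner vertex, e.g.\ $(4,0)$, shields the distance-$5$ vertex $(5,0)$ from ever becoming fire-adjacent, since $(5,0)$ has only one distance-$4$ neighbor. The proof therefore proceeds by a careful case analysis over Player~1's first placements, with the adversary using adaptive bookkeeping to ensure that after $5$ rounds at least one vertex on the effective barrier remains uncovered. I expect the formal argument to reduce to a pigeonhole estimate showing that, whichever barrier distance Player~1 ultimately targets, the adversary can keep her at least one firefighter short.
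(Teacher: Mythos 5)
Your offline half is correct and is essentially the paper's own argument: place every arriving firefighter on the $4N$-vertex layer $\{(x,y):|x|+|y|=N\}$, which cannot burn before the spreading phase of turn $N$, so Condition \eqref{eq1} closes the barrier in time.

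The online half has a genuine gap, and in fact your stated target (a) is unachievable as formulated. If the adversary guarantees $\sum_{i=1}^5 f_i\geq 20$, then Player $1$ wins with the oblivious strategy ``place every firefighter I ever receive on a still-free vertex of the distance-$5$ layer'': that layer has exactly $20$ vertices, none of them can burn before the spreading phase of turn $5$, and by the placement phase of turn $5$ at least $20$ firefighters have arrived, so the diamond is completed and the fire is contained inside distance $4$. Hence no adversary that defers ``the bulk of the $20$-firefighter budget to $f_5$'' can defeat all online strategies; announcing (even implicitly) a single deadline turn is exactly what Player $1$ can exploit. The paper's construction avoids this by letting the adversary choose between \emph{two different witness turns} depending on Player $1$'s first placement: it sets $f_1=1$, and if the first protected vertex lies at distance $\ell\leq 2$ it reveals $f_5=19$ (witness $N=5$: the distance-$5$ layer needs $20$ guards but only $19$ are usable, the first firefighter being trapped inside the burning region), whereas if $\ell>2$ (or no vertex is protected at turn $1$) it reveals $f_2=7$ (witness $N=2$: the distance-$2$ layer has $8$ vertices and only $7$ usable firefighters remain). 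In both branches all later $f_i$ are $0$, so a single uncovered vertex on the relevant layer lets the fire escape forever, and at most $5$ turns are used. This two-deadline case split on $\ell$ is the missing idea; a single-deadline plan with ``tuned'' $f_2,\dots,f_4$ cannot be repaired without it.
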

\begin{proof}
Wlog, suppose that the ignition vertex is $(0,0)$ and let $N$ be the smallest index such that  $\sum_{i=1}^N f_i \geq 4N$. An offline winning strategy  for Player $1$ is the following: build a diamond-shape encirclement at distance $N$ from the ignition vertex, by placing each firefighter eventually available on a grid point $(x,y)$ satisfying $|x|+|y|=N$. There are exactly $4N$ grid vertices $(x,y)$ such that $|x|+|y|=N$. Hence, Condition \eqref{eq1} guarantees that at turn $N$ the fire gets completely encircled. 

Being unaware of $N$, Player $1$ cannot apply the above strategy in an online set-up. We show that the adversary has a winning strategy $(f_i)_{i\geq 1}$, where $(f_i)_{i\geq 1}$ satisfies Condition \eqref{eq1}. Given $j>1$, denote by  $(f^j_i)_{i\geq 1}$  the firefighter sequence where $f_1=1$, $f_j=4j-1$ and $f_p=0$ for $p\notin\{1,j\}$.
Let $v\in D_\ell$, for some positive integer $\ell$, be the first node protected by Player $1$, where $D_\ell$ is the set of vertices in the grid $\mathbb{Z}\times\mathbb{Z}$ at distance $\ell$ from the ignition vertex. A winning strategy for the adversary 
is the following. If $\ell\leq 2$ (i.e. $v$ is at distance $1$ or $2$ from the ignition vertex), then the adversary provides the firefighter sequence $(f_i^5)_{i\geq 1}$. Hence, at the end of turn $4$, $20$ firefighters are needed to surround the fire, while $f_5=19$. Otherwise, if $\ell>2$ (i.e. $v$ is at distance  greater than $2$ from the ignition vertex), the adversary provides the sequence of firefighters $(f^2_i)_{i\geq 1}$. Hence, at the end of turn $2$, there will be an unprotected node $u\in D_2$  next to the fire  at distance $2$ from the ignition vertex. 
So, in both cases, the firefighter sequence is of the form $(f_i^j)_{i\geq 1}$ for some $j$ and the fire is not encircled after turn $j$. Since no firefighter will be available for the rest of the game, the fire will escape.
\end{proof}

\section{Online Containment of Fires on Grids}

In the previous section, we have considered sequences of firefighters $(f_i)_{i\geq 1}$ satisfying    Condition \eqref{eq1}, showing that such a condition is sufficient for Player $1$ to win offline, while there are instances of the online firefighter problem fulfilling  Condition\eqref{eq1} where Player $1$ loses.  The purpose of this section is that of providing sufficient conditions toward the \emph{online containment} of fires  on grids. Our first result (cf. Subsection \ref{1firefighterSec} below) shows that if $(f_i)_{i\geq 1}$ fulfils Condition \eqref{eq1} and at least one firefighter is eventually always available, then Player $1$ has a strategy to win online. The following Subsection \ref{SecConstraintOnline} considers the problem of weakening Condition \eqref{eq1} in order to define further sufficient conditions for the online containment of fires on grids.

\subsection{At Least  One Firefighter Always Available}\label{1firefighterSec}
Suppose that the sequence of firefighters revealed by Player $2$ is such that eventually at least one firefighter will be  available on each turn of the game. Then, we show that Condition \eqref{eq1} becomes sufficient for Player $1$ to win online. Intuitively, this is because the firefighter(s)  available at each turn can be employed \emph{next} to the fire to build incrementally a tight
 encirclement of it, waiting for later reinforcement. This way, no firefighter is wasted during subsequent turns of the game, while Condition \eqref{eq1} guarantees that eventually, there will be enough firefighters to close the  encirclement surrounding the fire. 
 
  More precisely, consider the simpler scenario where Player $1$ receives \emph{exactly} one firefighter at each turn $1,2,\dots ,\mu-1$ and  $m$ firefighters at turn $\mu$, where $m\geq  4\mu - (\mu-1)$. 
  For instance, Figure \ref{figu1} considers the case where $\mu=4$, $f_1=f_2=f_3=1$ and  $f_4=13$ and illustrates an online winning strategy for Player $1$.   
Such a strategy works as follows: Player $1$ uses the available firefighter at each turn to build two diagonal walls (cf. the positioning of the only firefighter received for the first three turns). This way, at the beginning of each turn $1\leq i\leq \mu$, the fire is always contained within a perimeter of size $4i$,  and each firefighter previously employed at some turn $j<i$ has been placed on such a perimeter. Therefore, the encirclement of the fire can be completed as soon as Condition \eqref{eq1} is fulfilled (at turn $4$, for the instance illustrated in  Figure \ref{figu1}). 
 \begin{center}
 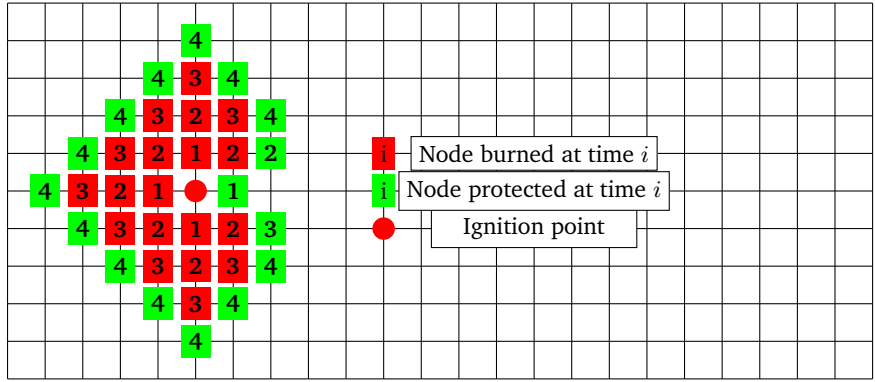
\begin{figure}[ht!]
 \begin{tikzpicture}
    [
     box/.style={rectangle, minimum size=0.1cm},
    ]

\foreach \x in {0,0.5,...,11.5}{
    \foreach \y in {-0.5,0,0.5,...,4,4.5}
        \node at (\x,\y){};
}

\foreach \x in {0,0.5,...,11}{
    \foreach \y in {-0.5,0,0.5,...,4,4.5}
        \draw (\x,\y) --  (\x+0.5,\y);
}
\foreach \x in {0,0.5,...,11.5}{
    \foreach \y in {-0.5,0,0.5,...,4}
        \draw (\x,\y) --  (\x,\y+0.5);
}

\node[circle,fill=red  ] at (2.5,2){};  
\node[box,fill=green  ] at (3,2){\small{\bf 1}};  
\node[box,fill=red  ] at (2,2){\small{\bf  1}}; 
\node[box,fill=red  ] at (2.5,1.5){\small{\bf 1}};   
\node[box,fill=red  ] at (2.5,2.5){\small{\bf 1}};   
\node[box,fill=green  ] at (3.5,2.5){\small{\bf 2}};  

\node[box,fill=red  ] at (3,2.5){\small{\bf 2}};  
\node[box,fill=red  ] at (2.5,3){\small{\bf 2}};  
\node[box,fill=red  ] at (2,2.5){\small{\bf 2}};  
\node[box,fill=red  ] at (1.5,2){\small{\bf 2}};  
\node[box,fill=red  ] at (2,1.5){\small{\bf 2}};  
\node[box,fill=red  ] at (2.5,1){\small{\bf 2}};  
\node[box,fill=red  ] at (3,1.5){\small{\bf 2}};  

\node[box,fill=green  ] at (3.5,1.5){\small{\bf 3}}; 
\node[box,fill=red  ] at (3,1){\small{\bf 3}}; 
\node[box,fill=red  ] at (2.5,0.5){\small{\bf 3}}; 
\node[box,fill=red  ] at (2,1){\small{\bf 3}}; 
\node[box,fill=red  ] at (1.5,1.5){\small{\bf 3}}; 
\node[box,fill=red  ] at (1,2){\small{\bf 3}};
\node[box,fill=red  ] at (1.5,2.5){\small{\bf 3}};  
\node[box,fill=red  ] at (2,3){\small{\bf 3}};  
\node[box,fill=red  ] at (2.5,3.5){\small{\bf 3}};  
\node[box,fill=red  ] at (3,3){\small{\bf 3}};  

\node[box,fill=green  ] at (3.5,1){\small \bf 4}; 
\node[box,fill=green  ] at (3,0.5){\small \bf 4};
\node[box,fill=green  ] at (2.5,0){\small \bf 4};
\node[box,fill=green  ] at (2,0.5){\small \bf 4};
\node[box,fill=green  ] at (1.5,1){\small \bf 4};    
\node[box,fill=green  ] at (1,1.5){\small \bf 4};
\node[box,fill=green  ] at (0.5,2){\small \bf 4};  
\node[box,fill=green  ] at (1,2.5){\small \bf 4};
\node[box,fill=green  ] at (1.5,3){\small \bf 4};
\node[box,fill=green  ] at (2,3.5){\small \bf 4};
\node[box,fill=green  ] at (2.5,4){\small \bf 4};  
\node[box,fill=green  ] at (3,3.5){\small \bf 4};
\node[box,fill=green  ] at (3.5,3){\small \bf 4};       

\node[circle,fill=red  ] at (5, 1.5){};    
\node[draw,align=left,fill=white] at (7,1.5){\small{$\quad$Ignition point$\quad$}}; 
\node[box,fill=red  ] at (5, 2.5){\small{i}};    
\node[draw,align=left,fill=white] at (7,2.5){\small{Node burned at time $i$}}; 
\node[box,fill=green  ] at (5, 2){\small{i}};    
\node[draw,align=left,fill=white] at (7,2){\small{Node protected at time $i$}};               

\end{tikzpicture} 
\caption{Online strategy to encircle the fire with (exactly) one firefighter available at each turn of  the game $1,\dots, N-1$, until   Condition \eqref{eq1} gets fulfilled at turn $N$.}\label{figu1}
\end{figure}
\end{center}  
  The general case, where Player $1$ eventually receives \emph{at least} one firefighter on each  turn  until Condition \eqref{eq1} is accomplished, is slightly more in involved. Roughly,  it  is solved as follows. As far as Player $1$ receives $f_j>1$ firefighters at each turn $j$ (while Condition \eqref{eq1} still needs to be accomplished) he will place the guaranteed available firefighter on a diagonal wall in front of the advance of the fire, while the extra firefighters (out from the guaranteed one) will be employed to encircle the fire, waiting for later reinforcement. As soon as Condition \eqref{eq1} is satisfied, the encirclement will be completed. Example \ref{example1} below  gives more details on the above sketched winning online strategy for Player $2$. This leads  to the results in Theorem \ref{theo2}, below.
  \begin{theorem}\label{theo2} Let the sequence of firefighters $(f_i)_{i\geq 1}$ revealed by Player $2$   be consistent with 
  Condition \eqref{eq1} and suppose that there exists an index $M$ such that $f_i\geq 1$ if $i\geq M$, and $f_i=0$ otherwise.
 Then, Player $1$ has an online strategy to win the firefighter problem on grids.
 
 \end{theorem}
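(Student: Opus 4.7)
The plan is to split the game into two phases and adapt the diagonal-wall strategy illustrated in Figure~\ref{figu1}. In Phase~1 (turns $1,\dots,M-1$) no firefighter is available, so the fire spreads freely and, at the end of turn $M-1$, the burnt region coincides with the closed $L^{1}$-ball $B_{M-1}=\{(x,y)\in\mathbb{Z}^{2}:|x|+|y|\le M-1\}$ around the (wlog) origin ignition vertex. The problem thus reduces to Phase~2: playing from turn $M$ with initial fire $B_{M-1}$ and receiving the tail $(f_i)_{i\ge M}$ with $f_i\ge 1$ and $\sum_{i=M}^{N}f_i\ge 4N$ for some $N$.

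In Phase~2, Player~1 monitors the running sum $\sum_{j\le i}f_j$ and recognises online the first turn $N^{*}$ at which Condition~\eqref{eq1} is satisfied (necessarily $N^{*}\ge M$). At each turn $i\in[M,N^{*}-1]$, he commits one of the $f_i\ge 1$ guaranteed firefighters to extending a V-shaped wall anchored at the tip $(M,0)$ on the east boundary of $B_{M-1}$, alternately feeding the NE arm $(M+1,1),(M+2,2),\dots$ and the SE arm $(M+1,-1),(M+2,-2),\dots$; the $f_i-1$ extra firefighters, if any, are pre-placed on vertices guaranteed to belong to the eventual encirclement in a way that preserves the bijection of invariant~(ii) below. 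Once $\sum_{j\le i}f_j\ge 4i$ at $i=N^{*}$, Player~1 uses the $f_{N^{*}}$ firefighters to fill every remaining vertex of the target encirclement.

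Correctness is obtained by a joint induction on $i\ge M$ maintaining two invariants. (i) \emph{Topological}: the fire at the end of turn $i$ lies strictly inside a simple vertex cut $C_{N^{*}}\subset\mathbb{L}_{2}$ formed by the V-wall together with the complementary arc of the distance-$N^{*}$ diamond, and every firefighter placed so far belongs to $C_{N^{*}}$. (ii) \emph{Arithmetic}: $|C_{N^{*}}|=4N^{*}$, because each wall vertex interior to the distance-$N^{*}$ diamond renders exactly one ``far-east'' vertex of the diamond unreachable by the fire, so wall vertices are in bijection with cut-off diamond vertices and the total count of $C_{N^{*}}$ equals the perimeter $4N^{*}$. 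Combining invariant~(ii) with $\sum_{i=M}^{N^{*}}f_i\ge 4N^{*}=|C_{N^{*}}|$, Player~1 has just enough firefighters across turns $M,\dots,N^{*}$ to fill every vertex of $C_{N^{*}}$; invariant~(i) then certifies that the fire is contained at the end of turn $N^{*}$, and Player~1 wins.

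The main obstacle is the timing aspect of invariant~(i). With only one guaranteed firefighter per turn, each arm of the V is extended on alternating turns, so successive vertices on the same arm lie two $L^{1}$-units farther from the origin. Anchoring the tip at $(M,0)$ — one grid step east of $B_{M-1}$ — exactly compensates for the Phase~1 free-burn head start of $M-1$ steps: a direct induction on $i\ge M$, mirroring the trace visible in Figure~\ref{figu1} shifted by $M-1$, shows that at the end of each turn $i$ the east-most burning vertex lies strictly west of the most recent wall extensions of both arms. On the remaining three sides of the fire, the arc vertices placed at turn $N^{*}$ halt the advance precisely at distance $N^{*}$. Hence the fire never crosses $C_{N^{*}}$ before it is closed, completing the proof.
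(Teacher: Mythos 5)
Your treatment of the core case --- exactly one firefighter per turn from $M$ to $N-1$ --- matches the paper's proof: free burn until turn $M$, then a V-shaped pair of diagonal walls anchored one step outside the burnt diamond at $(M,0)$ and fed alternately, with the counting argument that the wall vertices are in bijection with the ``cut-off'' vertices of the final diamond, so the separating set still has exactly $4N$ vertices and all firefighters spent so far lie on it. That part is sound and is essentially the paper's argument.

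The gap is in the general case $f_i\geq 1$, which is precisely where the theorem goes beyond the warm-up scenario. You write that the $f_i-1$ extra firefighters are ``pre-placed on vertices guaranteed to belong to the eventual encirclement,'' and your invariant (i) requires every firefighter placed so far to lie on $C_{N^*}$, defined as the V-wall plus an arc of the distance-$N^*$ diamond. But $N^*$ is revealed only at turn $N^*$: no vertex of the distance-$N^*$ arc is identifiable online, so apart from the next V-wall extensions (which absorb only one firefighter per turn) there is no vertex ``guaranteed to belong to the eventual encirclement.'' Moreover, if you park the extras adjacent to the current fire front at distance $P$ and then keep feeding only the original two arms, on subsequent turns the fire wraps around the outer ends of that parked block; those firefighters cease to belong to any separating cut, and the count $\sum_{i\leq N^*} f_i\geq 4N^*=|C_{N^*}|$ no longer suffices to close the encirclement. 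The paper's proof addresses exactly this point: when $f_P>1$, the extras are enqueued contiguously along the current fire perimeter starting from one wall end, that wall is abandoned, and a \emph{new} diagonal wall is started from the last enqueued firefighter at turn $P+1$, so that there are always two active diagonal walls shielding the flanks of everything placed so far (cf.\ Figure~\ref{figurefig2}). As a result the final cut is not ``one V-wall plus one arc of the $N^*$-diamond'' but a concatenation of diagonal segments and arcs at several distances, whose vertex count is still at most $4i$ at each turn $i$. Your invariant as stated cannot be maintained in the general case; you would need to replace the fixed target $C_{N^*}$ by this adaptively grown perimeter, or supply some other concrete placement rule for the extras together with a proof that they remain on the boundary of the burning region until turn $N^*$.
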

 \begin{proof}
  Let $N$ be the smallest  index such that $\sum_{i=1}^N f_i\geq  4N$ and assume wlog that the ignition vertex is  $v=(0,0)$. 
 
 By hypothesis, the fire spreads uncontrolled for $M-1$ turns. Hence, when the first firefighter(s) appear at turn $M$ the fire has a diamond shape, burning each grid  vertex $(x,y)$ within the polygon enclosed by the lines $y=mx+k, |m|=1, |k|=M$, i.e. each grid vertex satisfying $|x|+|y|<M$. Therefore, if $M=N$, then Player $1$ can tightly encircle the fire  by 
 assigning a firefighter to each grid-vertex $(x,y)$ such that $|x|+|y|=N$ (there are $4N$ such vertices).
 
Otherwise ($M<N$) , we first consider a simpler scenario where Player $1$ receives \emph{exactly} one firefighter from turn $M$ till turn $N-1$, providing an online strategy  to contain the fire. We will then  generalize such a winning strategy for Player $1$ to deal with the general case where $f_i\geq 1$ for $M\leq i\leq N-1$.
Having one firefighter available at each turn $M\leq i\leq N-1$, Player $1$ can progressively build two diagonal walls next to the fire: the first (resp. second) wall  from $(M,0)$ along the semi-line $y=x-M,x\geq M$ (resp. $y=-x+M,x\geq M$).  Precisely, at each turn $i=M+j$, for $0\leq j<N-M$, Player $1$ protects the vertex $(M+\lceil \frac{j}{2}\rceil, (-1)^j\lfloor  \frac{j}{2}\rfloor) $, alternating the building of such walls (cf. Figure \ref{figu1}).

Therefore, at the end of each turn  $M\leq i < N$ the fire burns  each grid vertex internal  to the the polygon defined by the  following inequalities (cf. Figure \ref{figu1}): 
\begin{equation}\label{eq3} 
 \begin{cases}
          y\leq x+i\\ y\geq -x-i \\ y\geq x-i \\ y\leq -x+i \\ y\geq x-M \\ y\leq -x+M
       \end{cases}
\end{equation}
The number of grid vertices on the perimeter of such a polygon is exactly $4i$. Moreover, we have placed all the firefighters globally received (from turn $M$ till turn $i$) on such a perimeter, along the diagonal walls defined by $ y= x-M, y = -x+M$. Hence, at turn $N$ we can completely surround the fire (cf. Figure \ref{figu1}).

We now turn on considering the general case where we receives at least one firefighter from turn $M$ to turn $N>M$ (rather than having $f_i=1$ for $M\leq i <N$). Note that, having $f_i=1$ for $M\leq i <N$ allowed us to place each new arriving firefighter on the perimeter of the polygon induced by the set of inequalities in  \ref{eq3} along the facets $y= x-M, y=-x+M$. Moreover, such facets are built incrementally  so that at the end of each turn $i$ the firefighters on the field are \emph{next} to the fire. This ensures that such firefighters will be useful to contain the fire for the rest of the game. To maintain such a property in the general case (namely, $f_i\geq 1$ for $M\leq i <N$) we proceed as follows. As far as we get $1$ firefighter we keep on building the two diagonal walls along $y= x-M, y=-x+M$. As soon as  we receive strictly more than one firefighter, say at turn $P\geq M$, we use the extra $f_P-1$ firefighters  to surround  the fire. More precisely, we stop building one of the two walls, say the one along $y=y-M$, and start  enqueuing the   $f_P-1$ extra firefighters along the perimeter of the burning polygon (starting e.g. from the facet defined by $y=-x+P$). At turn $P+1$ we will start building a new diagonal wall from the last enqueued firefighter, so that we always have two diagonal walls on construction (orthogonal to two advancing  fronts of fire). We keep proceeding as above, i.e. building two diagonal walls blocking the fire as far as we get one firefighter, and using the extra firefighters to surround the fire as soon as we get strictly more than one firefighter (cf. Figure   \ref{figurefig2}). This way, at each turn $i=M\dots N-1$, the fire is contained within a polygon whose perimeter has at most $4i$ grid vertices, and each firefighter used so far in the game is protecting  such a perimeter. Therefore, by hypothesis  the firefighters available at turn   $N$ will be  enough to close the tight  partial encirclement of the fire built at previous turns. 
 \end{proof}
   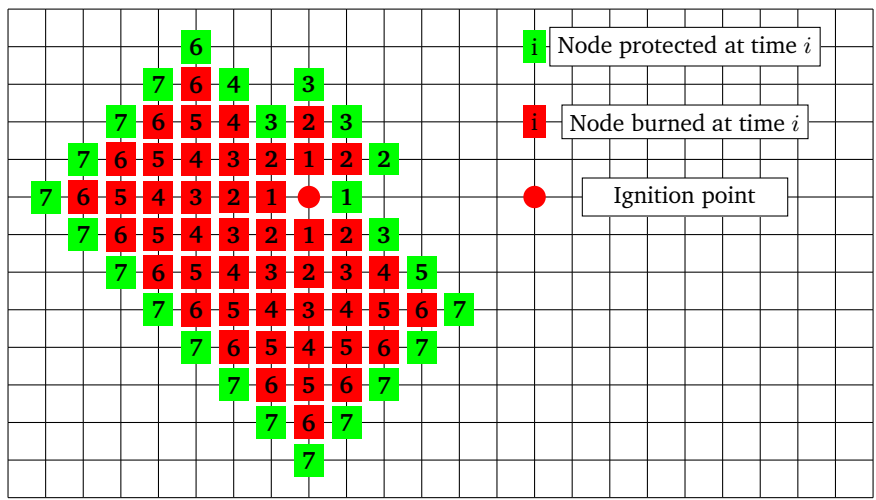
\begin{figure}[t!]
 
 \centering
 
 \begin{tikzpicture}
    [
     box/.style={rectangle, minimum size=0.1cm},
    ]

\foreach \x in {0,0.5,...,11.5}{
    \foreach \y in {1,1.5,...,7.5}
        \node at (\x,\y){};
}

\foreach \x in {0,0.5,...,11}{
    \foreach \y in {1,1.5,...,7.5}
        \draw (\x,\y) --  (\x+0.5,\y);
}
\foreach \x in {0,0.5,...,11.5}{
    \foreach \y in {1,1.5,...,7}
        \draw (\x,\y) --  (\x,\y+0.5);
}

\node[circle,fill=red  ] at (4,5){};  

\node[box,fill=green  ] at (4.5,5){\small{\bf 1}};  
\node[box,fill=red  ] at (3.5,5){\small{\bf  1}}; 
\node[box,fill=red  ] at (4,5.5){\small{\bf 1}};   
\node[box,fill=red  ] at (4,4.5){\small{\bf 1}};

\node[box,fill=green  ] at (5,5.5){\small{\bf 2}};  
\node[box,fill=red  ] at (4.5,5.5){\small{\bf 2}};  
\node[box,fill=red  ] at (4,6){\small{\bf 2}};  
\node[box,fill=red  ] at (3.5,5.5){\small{\bf 2}}; 
\node[box,fill=red  ] at (3,5){\small{\bf 2}};
  \node[box,fill=red  ] at (3.5,4.5){\small{\bf 2}};  
  \node[box,fill=red  ] at (4,4){\small{\bf 2}};  
  \node[box,fill=red  ] at (4.5,4.5){\small{\bf 2}};

\node[box,fill=green  ] at (5,4.5){\small{\bf 3}}; 
\node[box,fill=green  ] at (4.5,6){\small{\bf 3}};  
\node[box,fill=green  ] at (4,6.5){\small{\bf 3}};  
\node[box,fill=green  ] at (3.5,6){\small{\bf 3}};  
\node[box,fill=red  ] at (3,5.5){\small{\bf 3}};  
\node[box,fill=red  ] at (2.5,5){\small{\bf 3}};  
\node[box,fill=red  ] at (3,4.5){\small{\bf 3}};  
\node[box,fill=red  ] at (3.5,4){\small{\bf 3}};  
\node[box,fill=red  ] at (4,3.5){\small{\bf 3}};  
\node[box,fill=red  ] at (4.5,4){\small{\bf 3}};  

\node[box,fill=green  ] at (3,6.5){\small{\bf 4}};  
\node[box,fill=red  ] at (3,6){\small{\bf 4}}; 
\node[box,fill=red  ] at (2.5,5.5){\small{\bf 4}}; 
\node[box,fill=red  ] at (2,5){\small{\bf 4}}; 
\node[box,fill=red  ] at (2.5,4.5){\small{\bf 4}}; 
\node[box,fill=red  ] at (3,4){\small{\bf 4}};
\node[box,fill=red  ] at (3.5,3.5){\small{\bf 4}};  
\node[box,fill=red  ] at (4,3){\small{\bf 4}};
\node[box,fill=red  ] at (4.5,3.5){\small{\bf 4}};   
\node[box,fill=red  ] at (5,4){\small{\bf 4}};   

\node[box,fill=green ] at (5.5,4){\small{\bf 5}};   
\node[box,fill=red ] at (5,3.5){\small{\bf 5}}; 
\node[box,fill=red ] at (4.5,3){\small{\bf 5}};  
\node[box,fill=red ] at (4,2.5){\small{\bf 5}};  
\node[box,fill=red ] at (3.5,3){\small{\bf 5}}; 
\node[box,fill=red ] at (3,3.5){\small{\bf 5}}; 
\node[box,fill=red ] at (2.5,4){\small{\bf 5}};   
\node[box,fill=red ] at (2,4.5){\small{\bf 5}}; 
\node[box,fill=red ] at (1.5,5){\small{\bf 5}};  
\node[box,fill=red ] at (2,5.5){\small{\bf 5}}; 
\node[box,fill=red ] at (2.5,6){\small{\bf 5}}; 

\node[box,fill=green ] at (2.5,7){\small{\bf 6}};   
\node[box,fill=red ] at (2.5,6.5){\small{\bf 6}};
\node[box,fill=red ] at (2,6){\small{\bf 6}}; 
\node[box,fill=red ] at (1.5,5.5){\small{\bf 6}};  
\node[box,fill=red ] at (1,5){\small{\bf 6}};   
\node[box,fill=red ] at (1.5,4.5){\small{\bf 6}}; 
\node[box,fill=red ] at (2,4){\small{\bf 6}};  
\node[box,fill=red ] at (2.5,3.5){\small{\bf 6}};     
\node[box,fill=red ] at (3,3){\small{\bf 6}};    
\node[box,fill=red ] at (3.5,2.5){\small{\bf 6}};   
\node[box,fill=red ] at (4,2){\small{\bf 6}};   
\node[box,fill=red ] at (4.5,2.5){\small{\bf 6}};
\node[box,fill=red ] at (5,3){\small{\bf 6}};
\node[box,fill=red ] at (5.5,3.5){\small{\bf 6}};     

\node[box,fill=green ] at (2,6.5){\small{\bf 7}};  
\node[box,fill=green ] at (1.5,6){\small{\bf 7}};  
\node[box,fill=green ] at (1,5.5){\small{\bf 7}};  
\node[box,fill=green ] at (0.5,5){\small{\bf 7}};
\node[box,fill=green ] at (1,4.5){\small{\bf 7}};  
\node[box,fill=green ] at (1.5,4){\small{\bf 7}};  
\node[box,fill=green ] at (2,3.5){\small{\bf 7}};  
\node[box,fill=green ] at (2.5,3){\small{\bf 7}};  
\node[box,fill=green ] at (3,2.5){\small{\bf 7}};  
\node[box,fill=green ] at (3.5,2){\small{\bf 7}};
\node[box,fill=green ] at (4,1.5){\small{\bf 7}}; 
\node[box,fill=green ] at (4.5,2){\small{\bf 7}};  
\node[box,fill=green ] at (5,2.5){\small{\bf 7}};  
\node[box,fill=green ] at (5.5,3){\small{\bf 7}};  
\node[box,fill=green ] at (6,3.5){\small{\bf 7}};

\node[circle,fill=red  ] at (7, 5){};    
\node[draw,align=left,fill=white] at (9,5){\small{$\quad$Ignition point$\quad$}}; 
\node[box,fill=red  ] at (7, 6){\small{i}};    
\node[draw,align=left,fill=white] at (9,6){\small{Node burned at time $i$}}; 
\node[box,fill=green  ] at (7, 7){\small{i}};    
\node[draw,align=left,fill=white] at (9,7){\small{Node protected at time $i$}};               

\end{tikzpicture} 

 \caption{Winning the fire online  with  at least  one firefighter eventually always available until   Condition \eqref{eq1} is satisfied.}\label{figurefig2}
\end{figure}
  \begin{example}\label{example1}
 
Let the sequence of firefighters $(f_i)_{i\geq 1}$ received by Player $1$ be such that $f_1=f_2=1$, $f_3=4$, $f_4=f_5=f_6=1$, and $f_7=15$. Therefore, Player $1$ receives one \emph{or more} firefighters for the first six turns and a number of firefighters leading to the fulfillment of Condition \eqref{eq1} on the seventh turn, i.e. $\sum_{i=1}^7 f_i \geq 28$.

As illustrated in Figure \ref{figurefig2}, to win online, 
 Player $1$ proceeds as follows. As far as he receives exactly one firefighter per turn, he  employs them to build two diagonal walls (cf. the positioning of the first two firefighters)  against the front of the fire. In our example, this happens for the first two turns. At the third turn Player $1$ receives $4$ firefighters, i.e. a number of firefighter that is strictly greater than one but not enough to let Condition \eqref{eq1} being satisfied. Then, Player 1 uses one firefighter on the diagonal walls and the extra three firefighters to start surrounding the fire (cfr  the positioning of the firefighters received at turn $3$ in Figure \ref{figurefig2}). At the next turn\--i.e. at turn $4$ in our 
\-- Player $1$ will start building a new diagonal wall from the last enqueued firefighter, so that he  always have two diagonal walls under construction, orthogonal to two advancing  fronts of fire. Such diagonal walls will be alternatively enlarged (at turn $5$ and $6$ in our example), until  Player $1$ will be able to completely surround the fire (at turn $7$ in our example),  when the available firefighters will allow to have Condition \eqref{eq1} satisfied. 
\end{example}

 \subsection{Sufficient Conditions to Win the Fire Online on Grids}\label{SecConstraintOnline}

Consider the following natural generalisation of Condition \eqref{eq1}:

\begin{equation}\label{eq2}
\exists N\geq 1: \sum_{i=1}^N f_i\geq \ell\cdot N
\end{equation}
 We  study for which $\ell\in\mathbb{N}$ Condition \eqref{eq2} guarantees to extinguish 
   the fire online.  
  
 In the offline case, $\ell=4$ is sufficient  to contain the fire as stated in Theorem~\ref{FPgridOffineOnline}. 
 As already noted, such a strategy does not work online, since Player $1$ is not aware of how many firefighters will be available at each turn of the game. Therefore, the attempt of building a diamond-shaped encirclement at distance $N$ could result into a waste of all the firefighters placed on such an encirclement, as soon as the latter gets broken by the spreading fire.

 However, one can notice that 
 in this case a maximum number of $4N$ firefighters are lost. Based on this observation, it is possible to come up with an online strategy for Player $1$ that allows him to contain the fire under Condition \eqref{eq2} for $\ell=16$. Such a strategy  proceeds as follows: as soon as the encirclement under construction gets broken at turn $t$ of the game, Player $1$ starts to build a new diamond-shaped encirclement at distance $2t$ from the 
 ignition point.
Let $M$ be the turn of the game at which Condition \eqref{eq2} is satisfied for $\ell=16$, i.e. assume that a global number of $16M$ firefighters gets available overall the first $M$ turns. We show that at  turn $M$, we are guaranteed to have enough firefighters to complete an encirclement at distance $2M$. In fact, $4*2M=8M$ firefighters are needed for such an encirclement, while the number of firefighters lost in previous turns is bounded by $4M*\sum_{i=1}^\infty {\frac{1}{2}}^i = 8M$.

Therefore, we obtain:

 \begin{theorem}\label{theo3} Let the sequence of firefighters $(f_i)_{i\geq 1}$ revealed by Player $2$ to Player $1$   be such that:
 $$\exists N\geq 1: \sum_{i=1}^N f_i\geq 16\cdot N$$
 Then, Player $1$ has an online strategy to win the firefighter problem on grids.
 
 \end{theorem}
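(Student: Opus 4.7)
The plan is to formalize the sketched strategy. I would fix $d_0 := 0$ and $d_k := 2^{k-1}$ for $k\ge 1$, and let $E_k := \{(x,y)\in\mathbb{Z}^2 : |x|+|y|=d_k\}$ denote the diamond of $L_1$-radius $d_k$, which consists of exactly $4 d_k$ vertices. Player~$1$ proceeds in \emph{phases}: during phase $k$, comprising turns $d_{k-1}+1,\dots,d_k$, he places every firefighter received on an unprotected vertex of $E_k$. Since the fire spreads by at most one $L_1$-unit per turn from the ignition vertex $(0,0)$, it reaches $E_k$ only at turn $d_k$; hence phase $k$ \emph{succeeds} whenever $E_k$ becomes fully protected by the end of turn $d_k$, in which case the fire is contained inside $E_k$ and Player~$1$ wins. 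Otherwise the phase \emph{fails} and phase $k+1$ begins. Note that the strategy is purely online since the threshold $N$ from Condition~\eqref{eq2} is never used.

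The core step of the proof will be a counting argument. Let $N$ be given by Condition~\eqref{eq2} and let $K$ be the unique index with $d_{K-1}<N\le d_K$; then $d_K = 2 d_{K-1}<2N$. It suffices to show that at least one of phases $1,\dots,K$ must succeed. Assuming for contradiction that all $K$ phases fail, in each such phase strictly fewer than $|E_k|=4 d_k$ firefighters are placed on $E_k$ (otherwise the encirclement would be complete and the phase would have succeeded), while all firefighters received during the phase are placed on $E_k$. This yields
\[
\sum_{i=1}^{d_K} f_i \ <\ \sum_{k=1}^{K} 4 d_k \ =\ 4\,(2^K-1) \ <\ 8 d_K,
\]
while Condition~\eqref{eq2} combined with $d_K<2N$ gives
\[
\sum_{i=1}^{d_K} f_i \ \ge\ \sum_{i=1}^{N} f_i \ \ge\ 16 N \ >\ 16 \cdot \tfrac{d_K}{2} \ =\ 8 d_K,
\]
a contradiction.

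The only delicate point I expect is verifying that the strategy is always well-defined: whenever phase $k$ begins, $E_k$ should contain no burnt and no previously-protected vertex. The former holds because after turn $d_{k-1}$ the burnt region is contained in the $L_1$-ball of radius $d_{k-1}$, strictly inside $E_k$ of radius $d_k = 2 d_{k-1}$; the latter holds because every firefighter placed during an earlier phase $j<k$ lies on $E_j$ with $|x|+|y|=d_j\le d_{k-1}<d_k$. Thus every placement on $E_k$ during phase $k$ is legal throughout, and the initial phase $k=1$ (a single turn with $d_1=1$) is handled identically without any special case.
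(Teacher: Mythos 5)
Your proof is correct and follows essentially the same approach as the paper's: restart the diamond-shaped encirclement with doubled radius each time the current one is broken, and bound the wasted firefighters by a geometric series against the $8d_K$ needed for the final diamond. Your phase-by-phase formalization with radii $d_k=2^{k-1}$ and the explicit counting contradiction is in fact more careful than the paper's sketch (which even miscomputes its geometric sum), and the well-definedness check at the end is a welcome addition.
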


\end{document}